\newtheorem{theorem}{\bf Theorem}
\newtheorem{proposition}{\bf Proposition}
\newtheorem{definition}{\bf Definition}
\newlength{\aligntop}
\newlength{\alignbot}
\begin{document}

\title{\huge Many-to-Many Matching Games for Proactive Social-Caching in Wireless Small Cell Networks \vspace{-0.3cm}}

\author{\authorblockN{ Kenza Hamidouche$^1$, Walid Saad$^2$, and  M\'erouane Debbah$^1$} \authorblockA{\small
$^1$ Alcatel-Lucent Chair on Flexible Radio - SUP\'ELEC, Gif-sur-Yvette, France, \\Emails: \url{{kenza.hamidouche,merouane.debbah}@supelec.fr}\\
$^2$ Electrical and Computer Engineering Department, University of Miami, Coral Gables, FL, \\ Email: \url{walid@miami.edu}\vspace{-1cm}
 }%
   \thanks{This research is supported by the National Science Foundation under Grants CNS-1253731 and CNS-1406947.}
 }
\date{}
\maketitle

\vspace{1cm}\begin{abstract}
In this paper, we address the caching problem in small cell networks from a game theoretic point of view. In particular, we formulate the caching problem as a many-to-many matching game between small base stations and service providers' servers. The servers store a set of videos and aim to cache these videos at the small base stations in order to reduce the experienced delay by the end-users. On the other hand, small base stations cache the videos according to their local popularity, so as to reduce the load on the backhaul links. We propose a new matching algorithm for the many-to-many problem and prove that it reaches a pairwise stable outcome. Simulation results show that the number of satisfied requests by the small base stations in the proposed caching algorithm can reach up to three times the satisfaction of a random caching policy. Moreover, the expected download time of all the videos can be reduced significantly. \vspace{-0.2cm}
\end{abstract}

\section{Introduction}\vspace{-0.1cm}
During the last decade, the rapid proliferation of smartphones coupled with the rising popularity of Online Social Networks (OSNs) have led to an exponential growth of mobile video traffic \cite{1}. Meanwhile, existing wireless networks have already reached their capacity limits, especially during peak hours \cite{2}. In order to ensure acceptable Quality of Experience (QoE) for the end-users, the next generation of wireless networks will consist of a very dense deployment of low-cost and low-power small base stations (SBSs) \cite{3}. The SBSs provide a cost-effective way to offload traffic from the main macro-cellular networks. However, the prospective performance gains expected from SBS deployment will be limited by capacity-limited and possibly heterogeneous backhaul links that connect the SBSs to the core network \cite{3}. Indeed, the solution of deploying high speed fiber-optic backhaul is too expensive and thus, the use of DSL backhaul connections  need to be considered \cite{JDSU}. 

Distributed caching at the network edge is considered as a promising solution to deal with the backhaul bottleneck \cite{4,5}. The basic idea is to duplicate and store the data at the SBSs side. Consequently, users' requests can be served locally, from the closest SBSs without using the backhaul links, when possible. 
In prior works, the cache placement problem has been mainly addressed for wired networks, especially for Content Delivery Networks (CDNs) \cite{10, 11, 12, 13}. However, the CDNs topology is different from the small cell networks topology since the CDNs' caching servers are part of the core network while the SBSs are connected to the core network and the User Equipments (UE) via backhaul and radio links, respectively. Hence, it is necessary to explore new cache placement strategies that consider the limited capacity of radio and backhaul links. The placement problem in wireless networks has been studied in \cite{4,5,20} in which a placement scheme that minimizes the expected delay for data recovery has been proposed. The placement strategy is defined without considering the limited capacity of backhaul links. In addition, the assignment of data is based on the global popularity of videos, i.e., the expected proportion of users that would request each video, whereas the popularity of data can differ from an SBS to another SBS depending on the consumption of the users connected to each SBS. Thus, unlike the existing works \cite{4,5,6,7}, which focus on either reducing the download time for the end-users or minimizing the network load, we combine both parameters by considering simultaneously the limited capacity of backhaul and wireless links as well as the induced traffic by each file at the SBSs.

The main contribution of this paper is to develop a novel caching algorithm that aims to reduce the backhaul load and the experienced delay by the end-users when accessing shared videos in OSNs. 
 We leverage the information hosted by OSNs 
to design an accurate \emph{proactive} caching strategy, in which the SBSs \emph{predict} the users' requests and download ahead of time the related videos. We formulate the cache placement problem using \textit{matching theory} which is a game theoretic approach that provides solid mathematical tools suitable to study the considered problem \cite{Roth1990}. While most of the works in wireless networks rely on the \emph{one-to-one} and \emph{many-to-one} matchings \cite{Lesh12,Bres99,Jorwieck2011,14,15,16}, we formulate the caching problem as a \emph{many-to-many matching game} which is less understood in the literature. The formulated game is a matching problem between the two sets of Service Provider Servers (SPSs) and SBSs. Each SPS that hosts a set of videos aims to be matched to a group of SBSs where it can cache related videos. On the other hand, the SBSs are matched to a set of SPSs and indirectly to a set of videos that are stored in those SPSs. Subsequently, the SPSs and SBSs rank one another so as to decide on which files will be cached and at which SBSs.

To solve the \emph{many-to-many game}, we propose a novel, distributed algorithm and prove that it can reach a stable outcome, i.e., no player will have the incentive to leave his partners for other players. 



The rest of this paper is organized as follows: Section~\ref{sec:mod} presents the system model. Section~\ref{sec:form} presents the game formulation and properties. In Section~\ref{sec:sol}, we study the proposed sable matching algorithm. Simulation results are presented and analyzed in Section~\ref{sec:sim}. Finally, conclusions are drawn in Section~\ref{sec:conc}.\vspace{-0.3cm}
\section{System Model and Game Formulation}\label{sec:mod}
Consider two networks, a virtual network and a real network. The virtual network represents an OSN through which $N$ UEs in the set $\mathcal{N}=\{u_1, u_2, u_3,\dots, u_N\}$ are connected to one another via friendship relationships. Thus, these users can interact, communicate and share information with their friends. Suppose that the $N$ UEs share and watch videos chosen from a library of $V$ videos in the set $\mathcal{V}=\{v_1, v_2,\dots,v_V\}$. The service providers that supply the videos store the provided videos in their SPSs. To ensure a better QoE for the end-users, instead of serving users via capacity-limited backhaul links, service providers prefer to store copies of the videos deeper in the network, i.e., at the SBSs, closer to UEs. The real network consists of the $N$ UEs, served by $K$ SPSs in the set $\mathcal{C}=\{c_1, c_2,..., c_K\}$ and $M$ SBSs in the set $\mathcal{M}=\{s_1, s_2, s_3,\dots, s_M\}$. The real network's topology is illustrated in Fig. \ref{model}. Each SPS $c_i$ is connected to an SBS $s_j$ via low-rate backhaul link of capacity $b_{ij}$, using which the SBS downloads videos from that SPS.

%
%
%

The SBSs are equipped with storage units of high but limited storage capacities $Q=[q_1, q_2, q_3, ..., q_M]$, expressed by the number of videos that each SBS can store. Thus, the service providers can cache their videos in the SBSs such that each SBS $s_i$ can locally serve a UE $u_j$ via a radio link of capacity $r_{ij}$.

\begin{figure}
	\center
	\includegraphics[scale=0.25]{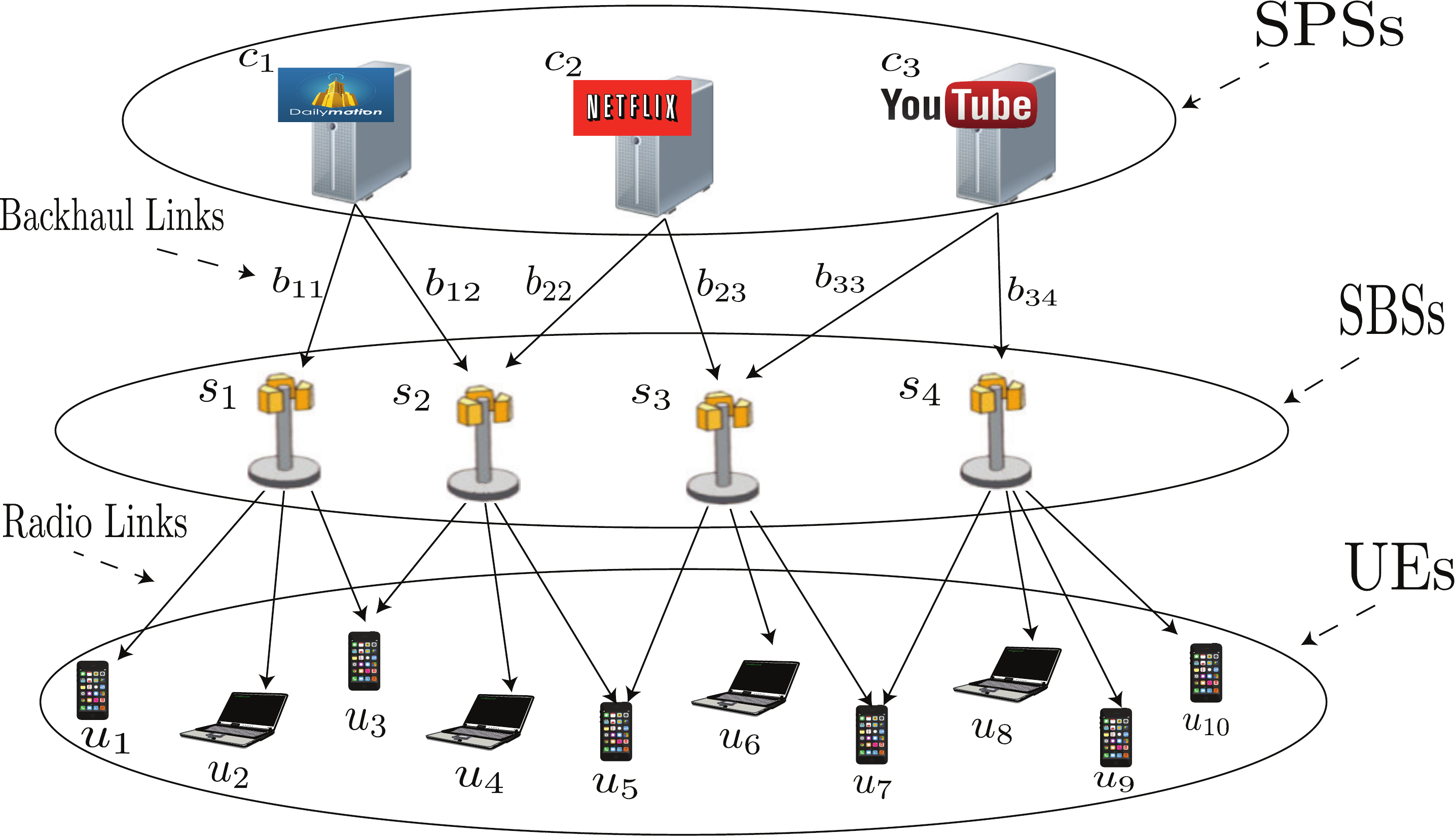}
	 	\caption{Illustration of the proposed network model.}
	\label{model}
\end{figure}

In this scenario, our goal is to produce a proactive download of video content at the SBSs level. A caching is said proactive if the SBSs can predict the users' requests and download ahead of time the related videos. Each SBS captures users' requests for the shared videos based on users' interests and interactions in the OSN. The most important properties of social content that can be used to design efficient proactive caching strategies are discussed next.

\subsection {\textit{ Social Interactions}} 
The videos that a user typically watches depend strongly on the friends who share them. In fact, a user is more likely to request a video, shared by one of his friends, if this user is used to watch the shared videos by that friend \cite{soc}. This induced popularity can be given by: 
$I_{\textrm{social}}=\frac{\alpha_{ln}}{\sum_{j=1}^{F_l}{\alpha_{jl}}}$,
where $\alpha_{ln}$ is the number of videos previously shared by user $u_n$ and viewed by user $u_l$. $F_l$ is the number of $u_l$'s friends.

\subsection{\textit{ Sharing Impact}}
Given that each video file can belong to a distinct category (e.g., news, music, games, etc.), we let $S_{gl}$ the number of videos of category $g$ shared by a user $u_l$. Whenever a user $u_l$'s request for a specific video is predicted and cached in its serving SBS $s_m$, sharing this video with this user's friends can have an important impact on the traffic load. This sharing impact depends on the number of user $u_l$'s friends that are connected to the same SBS and the probability that user $u_l$ shares the video. More formally, the sharing impact is given by:
$I_{\textrm{sharing}}= F_l^m\cdot  \frac{S_{gl}}{\sum_{i=1}^{G}{S_{il}}}$,
where $F_l^m$ is the number of $u_l$'s friends that are connected to the SBS $s_m$ and $G$ is the total number of the considered video categories.
\subsection{\textit{Users' Interests }}
Whenever a user is interested in a certain topic, it can request a video that belongs to its preferred categories irrespective of the friend who shared it \cite{soc},\cite{soc2}. Based on the categories of the previously watched videos by a user $u_l$, an SBS can predict the user $u_l$'s interests. The impact of this parameter is computed using:
$I_{\textrm{Interests}}= \frac{V_{gn}}{\sum_{i=1}^{H}{V_{il}}}$,
with $V_{gn}$ being the number of viewed videos of category $g$ by a user $u_n$, and $H$ being the number of videos in the history of user $u_l$.

Given these factors, our goal is to predict users' requests and accordingly select and cache a set of videos ateach SBS. Hence, we formulate this caching problem as a \emph{many-to-many matching} game, in which the SPSs aim to cache their videos in the SBSs that offer the smallest download time for the requesting users, while the SBS prefer to cache the videos that can reduce the backhaul load.   
\section{Proactive-Caching as a Many-to-Many Matching Game}\label{sec:form} 
\subsection{Matching Concepts}
To model the system as a many-to-many matching game \cite{Soto1999}, we consider the two sets $\mathcal{C}$ of SPSs and $\mathcal{M}$ of SBSs as two teams of players. The \textit{matching} is defined as an assignment of SPSs in $\mathcal{C}$ to SBSs in $\mathcal{M}$. The SPSs acts on behalf of the video files and each of them decides on its own videos. Meanwhile, SBSs store videos depending on their storage capacity. In a matching game, the storage capacity of an SBS $s$ as well as the number of SBSs, in which an SPS $c$ would like to cache a given file $v$ are known as the quotas, $q_s$ and $q_{(c,v)}$, respectively \cite{Roth1990}. Since an SPS decides in which it caches a video $v$ independently from the other owned files, for ease of notation, we use $v$ instead of $(c,v)$ pairs. 
\begin{definition}
A \emph{many-to-many matching} $\mu$ is a mapping from the set $\mathcal{M}\cup \mathcal{V}$ into the set of all subsets of $\mathcal{M}\cup \mathcal{V}$ such that for every $v \in \mathcal{V}$ and $s \in \mathcal{M}$ \cite{Roth1991}:
\begin{enumerate}
\item $\mu (v)$ is contained in $\mathcal{S}$ and $\mu(s)$ is contained in $\mathcal{V}$; 
\item $|\mu (v)| \leq q_v $ for all $v$ in $\mathcal{V}$;
\item $|\mu (s)| \leq q_s$ for all $s$ in $\mathcal{S}$;
\item $s$ is in $\mu (v)$ if and only if $v$ is in $\mu(s)$,
\end{enumerate}
\end{definition}
with $\mu (v)$ being the set of player $v$'s partners under the matching $\mu$.
 
The definition states that a matching is a many-to-many relation in the sense that each stored video in an SPS is matched to a set of SBSs, and vice-versa. In other words, an SPS can decide to cache a video in a number of SBSs and an SBS cache videos originating from different SPSs. 
Before setting an assignment of videos to SBSs, each player needs to specify its \emph{preferences} over subsets of the opposite set based on its goal in the network. We use the notation $S \succ_m T$ to imply that SBS $m$ prefers to store the videos in the set $S \subseteq \mathcal{V}$ than to store the ones proposed in $T \subseteq \mathcal{V}$. A similar notation is used for the SPSs to set a preference list for each video. Faced with a set $S$ of possible partners, a player $k$ can determine which subset of $S$ it wishes to match to. We denote this choice set by $C_k(S)$. Let $A(i,\mu)$ be the set of $j \in \mathcal{V} \cup \mathcal{M}$ such that $i \in \mu(j)$ and $j \in \mu (i)$.

To solve the matching game, we are interested to look at a stable solution, in which there are no players that are not matched to one another but they all prefer to be partners. In many-to-many models, many stability concepts can be considered depending on the number of players that can improve their utility by forming new partners among one another. However, the large number of SBSs which is expected to exceed the number of UEs \cite{Number}, makes it difficult to identify and organize large coalitions than to consider pairs of players and individuals. Hence, in this work we are interested in the notion of pairwise stability, defined as follows \cite{Roth1984}: 

\begin{definition}
A matching $\mu$ is \emph{pairwise stable} if there does not exist a pair $(v_i,s_j)$ with $v_i\notin\mu(s_j)$ and $s_j\notin \mu(v_i)$ such that $T\in C_{v_i}(A(v_i,\mu) \cup \{s_j\})$ and $S\in C_{s_j}(A(s_j,\mu) \cup \{v_i\})$ then $T\succ_{v_i} A(v_i,\mu)$ and $S\succ_{s_j} A(s_j,\mu)$
\end{definition}

In the studied system, the SBSs and SPSs are always interested in the gain they can get from individuals of the opposite set. For instance, an SBS would always like to first cache the most popular file as long as that file is proposed to it. Thus, even though the set of stable outcomes may be empty \cite{Soto1999}, SPSs and SBSs have \emph{substitutable} preferences, defined as follows \cite{kelso1982}:

\begin{definition}
Let $T$ be the set of player $i$'s potential partners and $S \subseteq T$. Player $i$'s preferences are called \emph{substitutable} if for any players $k$, $k^{\prime} \in C_i(S)$ then $k\in C_{i}(S\setminus \{k^{\prime}\})$.
\end{definition}
In fact, \emph{substitutability} is the weakest needed condition for the existence of a pairwise stable matching in a many-to-many matching game \cite{Roth1984}. 

\subsection{Preferences of the Small Base Stations\vspace{-0.2ex}}
Based on the social features previously discussed, we define the local popularity of a video $v_i$ at the $m^{th}$ SBS as follows:   
\begin{equation}
P_{v_i}= \sum_{l=1}^{F_n^m}{I_{\textrm{sharing}} (\gamma \cdot I_{\textrm{social}} + (1-\gamma) I_{\textrm{Interests}})},
\end{equation}
where $\gamma\in [0,1]$ is a weight that balances the impacts of social interactions and users' interests on the local popularity of a video.
\subsection{Preferences of the Service Provider Servers}
The goal of service providers is to enhance the quality experienced by the end-users. In fact, an SPS $c_i$ would prefer to cache a video $v_k$ at the SBS $s_j$ that offers the smallest download time for the expected requesting UEs. The download time depends on the capacity of the backhaul link $b_{ij}$ and the radio links $r_{jn}$ that connect the SBS $s_j$ to the UE $u_n$. The video file is first downloaded by $s_j$ which then serves the UEs. Thus, in the worst case, downloading a video stored in $c_i$ takes the required time to pass by the link with the poorest capacity. When many UEs are expected to request the same file from $s_j$, the download time is given by:
\begin{equation}
T_D = \frac{1}{\textrm{min}(b_{ij}, \frac{\sum_{n=1}^{N}r_{jn}}{N})}.
\end{equation}
Since each video might be requested by different UEs, an SPS defines its preferences over the SBSs for each owned video file.

\section{Proactive Caching Algorithm}\label{sec:sol}\vspace{-0.2cm}
To our knowledge, \cite{Infocom} is the only work that deals with many-to-many matchings in wireless networks. The proposed algorithm deals with \emph{responsive} preferences which is a stronger condition than \emph{substitutability}. Thus, the algorithm could not be applied to our case. Under substitutable preferences, a stable matching algorithm has been proposed in \cite{Roth1990} for many-to-one games. The pairwise stable matching in the many-to-many problem, has been proved to exist between firms and workers when salaries (money) are explicitly incorporated in the model \cite{Roth1984}. Here, we extend and adapt these works to our model in order to propose a new matching algorithm that is proven to reach a pairwise stable.

\subsection{Proposed Algorithm}
After formulating the caching problem as a many-to-many game, we propose an extension of the \emph{deferred acceptance} algorithm \cite{GaleandShapley} to the current model with SPSs proposing. The algorithm consists of three phases. During the first phase, SPSs and SBSs discover their neighbors and collect the required parameters to define the preferences, such as the backhaul and radio links capacities. This can be done for instance, by exchanging \emph{hello} messages periodically. In the second phase, SPSs define a preference list for each owned file over the set of SBSs, while the SBSs define their preferences over the set of videos that would be proposed by the SPSs. The last phase consists of two steps. In the first step, every SPS proposes an owned video to the most preferred set of SBSs that offer the shortest download time for that video. Afterwards, each SBS $s_j$ rejects all but the $q_j$ most popular videos from the set of alternatives proposed to it. In the second step, the SPSs propose an owned video to the most preferred set of SBSs, which includes the SBSs to which it previously proposed that video and have not yet rejected it (\emph{substitutability}). Each SBS rejects all but its choice set from the proposed videos. The second step is repeated until no rejections are issued. The algorithm is summarized in Table \ref{tab:algo1}.

\begin{table}[!t]
\scriptsize
  \centering
  \caption{
    \vspace*{-0.3em}Proposed Proactive Caching Algorithm}
    \begin{tabular}{p{8cm}}
      \hline

\vspace{-0.1cm}\textbf{Phase 1 - Network Discovery:}   \vspace*{.1em}\\
\hspace*{1em}-Each SPS discovers its neighboring SBSs and collects the required network \vspace*{.1em}\\
\hspace*{1em}parameters.\vspace*{.1em}\\
\textbf{Phase 2 - Specification of the preferences}   \vspace*{.1em}\\
\hspace*{1em}-Each SPS and SBS sets its preference list(s).\vspace*{.1em}\\
\textbf{Phase 3 - Matching algorithm}   \vspace*{.1em}\\
\hspace*{1em}-The SPSs propose each owned video to the SBSs in $C_{v_i}(\mathcal{M})$ to cache it.\vspace*{.1em}\\
\hspace*{1em}-Each SBS rejects all but the $q_s$ most preferred videos.\vspace*{.1em}\\
\hspace*{1em}\textbf{Repeat}\vspace*{.1em}\\
\hspace*{2em}-The SPSs propose each file to the related most preferred set of SBSs \vspace*{.3em}\\
\hspace*{2em}that includes all of those SBSs whom it previously proposed to and \vspace*{.3em}\\
\hspace*{2em} who have not yet rejected it (\textbf{\emph{Substitutability}}). \vspace*{.3em}\\
\hspace*{2em}-Each SBS rejects all but the $q_s$ most preferred videos.\vspace*{.1em}\\
\hspace*{1em}\textbf{Until} convergence to a stable matching\vspace*{.1em}\\
   \hline
    \end{tabular}\label{tab:algo1}\vspace{-0.2cm}
\end{table}

\subsection {Pairwise Stability}
Let $\mathcal{P}_{s_j}(k)$ be the set of proposals received by an SBS $s_j$ at step $k$, and $C_{v_i}(\mathcal{M}, k)\subseteq \mathcal{M}$ be the choice set of SBSs to which a video $v_i$ has been proposed at step $k$. By analyzing the algorithm, we state the following propositions.

\begin{proposition}
\emph{Offers remain open}: For every video $v_i$, if an SBS $s_j$ is contained in $C_{v_i}(\mathcal{M},(k-1))$ at step $k-1$ and did not reject $v_i$ at this step, then $s_j$ is contained in $C_{v_i}(\mathcal{M},k)$.
\end{proposition}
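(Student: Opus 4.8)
The plan is to argue directly from the update rule of the proposed deferred-acceptance algorithm together with the \emph{substitutability} of the video preferences (Definition~3). Throughout, let $\mathcal{A}_{v_i}(k)$ denote the set of SBSs still available to $v_i$ at step $k$, i.e.\ the SBSs that have not rejected $v_i$ in any earlier step, so that $C_{v_i}(\mathcal{M},k)=C_{v_i}(\mathcal{A}_{v_i}(k))$ is exactly the most preferred subset that $v_i$ proposes to at step $k$. The first thing I would establish is how the available set evolves: since rejections are permanent, the only SBSs that leave $\mathcal{A}_{v_i}(\cdot)$ between step $k-1$ and step $k$ are those that reject $v_i$ at step $k-1$. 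Writing $D$ for this set of rejecting SBSs, this gives $\mathcal{A}_{v_i}(k)=\mathcal{A}_{v_i}(k-1)\setminus D$. Moreover, because an SBS can only reject a video that was actually proposed to it, every element of $D$ lies in the step-$(k-1)$ choice set, i.e.\ $D\subseteq C_{v_i}(\mathcal{M},k-1)$.

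With this structure in place, the hypotheses give $s_j\in C_{v_i}(\mathcal{M},k-1)=C_{v_i}(\mathcal{A}_{v_i}(k-1))$ and $s_j\notin D$, so in particular $s_j\in\mathcal{A}_{v_i}(k)$; it then remains to show $s_j\in C_{v_i}(\mathcal{A}_{v_i}(k))$. The core step is to peel off the elements of $D$ one at a time and invoke substitutability at each removal. Enumerating $D=\{d_1,\dots,d_p\}$, I would prove by induction on $t$ that $\{s_j,d_{t+1},\dots,d_p\}\subseteq C_{v_i}\!\left(\mathcal{A}_{v_i}(k-1)\setminus\{d_1,\dots,d_t\}\right)$: the base case $t=0$ is the hypothesis, and in the inductive step both any surviving element $x\in\{s_j,d_{t+2},\dots,d_p\}$ and the element $d_{t+1}$ being deleted belong to the current choice set, so Definition~3 keeps $x$ after the deletion. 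Taking $t=p$ yields $s_j\in C_{v_i}(\mathcal{A}_{v_i}(k-1)\setminus D)=C_{v_i}(\mathcal{A}_{v_i}(k))=C_{v_i}(\mathcal{M},k)$, which is the claim.

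The main obstacle I anticipate is the bookkeeping in this induction rather than any deep difficulty. The form of substitutability available (Definition~3) only guarantees that deleting a \emph{chosen} element preserves the other chosen elements, so at every stage I must verify that each not-yet-removed $d_{t+1}$ is still in the current choice set before I am entitled to delete it; this is precisely why the inductive hypothesis has to carry along the whole remaining block of $D$ and not merely $s_j$ alone. Once this is organized correctly the conclusion is immediate, and the same monotonicity of the videos' choice sets is exactly what will be needed to justify the algorithm's update rule and the subsequent convergence and pairwise-stability arguments.
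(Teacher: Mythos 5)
Your proof is correct, and while it rests on the same engine as the paper's --- substitutability forcing the video's choice set to retain un-rejected partners as the pool of available SBSs shrinks --- it takes a genuinely more direct route to the conclusion. The paper first asserts the consistency identity $C_{v_i}(\mathcal{M},k-1)=C_{v_i}\bigl(C_{v_i}(\mathcal{M},k-1)\cup C_{v_i}(\mathcal{M},k)\bigr)$, then uses substitutability to place $s_j$ in $C_{v_i}\bigl(C_{v_i}(\mathcal{M},k)\cup\{s_j\}\bigr)$, and finishes by contradiction: excluding $s_j$ from $C_{v_i}(\mathcal{M},k)$ would exhibit a strictly preferred set of still-available SBSs, violating the optimality of the step-$k$ choice set. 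You instead peel the rejecting set $D$ off $\mathcal{A}_{v_i}(k-1)$ one element at a time and land exactly on $\mathcal{A}_{v_i}(k)$, so $s_j\in C_{v_i}(\mathcal{M},k)$ drops out directly, with no contradiction and no appeal to the consistency property of choice functions (which the paper justifies only informally). Your version also makes explicit the induction that the paper's phrase ``substitutability implies'' silently relies on: Definition~3 only licenses deleting an element that is currently \emph{chosen}, so one must carry the entire not-yet-deleted block of $D$ through the induction to certify that each $d_{t+1}$ is still chosen when its turn comes --- precisely what your inductive hypothesis does. The only step you should spell out rather than assert is $D\subseteq C_{v_i}(\mathcal{M},k-1)$; it follows in one line because $\mathcal{P}_{s_j}(k-1)$ consists exactly of the videos $v$ with $s_j\in C_{v}(\mathcal{M},k-1)$, so an SBS can reject $v_i$ at step $k-1$ only if it belongs to that step's choice set. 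This is a remark, not a gap.
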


\begin{proof}
Note that $C_{v_i}(\mathcal{M}, k-1)=C_{v_i}(C_{v_i}(\mathcal{M}, k-1)\cup C_{v_i}(\mathcal{M}, k))$, since $C_{v_i}(\mathcal{M}, k-1)$ is video $v_i$'s choice set from all those sets whose elements have not been rejected prior to step $k-1$, while $C_{v_i}(\mathcal{M}, k)$ is the choice set from the smaller class of sets whose elements have not been rejected prior to step $k$. Here, substitutability implies that if $s_j\in C_{v_i}(\mathcal{M}, k-1)$, then $s_j\in C_{v_j}(C_{v_i}(\mathcal{M}, k)\cup \{s_j\})$. So if $s_j \in C_{v_i}(\mathcal{M}, k-1)$ is not rejected at step $k-1$, it must be contained in $C_{v_i}(\mathcal{M}, k)$, since otherwise $C_{v_i}(C_{v_i}(\mathcal{M}, k)\cup \{s_j\}) \succ_{v_i} C_{v_i}(\mathcal{M}, k)$, violating the requirement that $C_{v_i}(\mathcal{M}, k)$ is the most preferred set whose elements have not been rejected.
\end{proof}

\begin{proposition}
\emph{Rejections are final}: If a video $v_i$ is rejected by an SBS $s_j$ at step $k$ then at any step $p \geq k$, $v_i \notin C_{s_j}(\mathcal{P}_{s_j}(p) \cup \{v_i\})$.
\end{proposition}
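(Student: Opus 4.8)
The plan is to prove the statement by induction on the step index $p\ge k$, with $s_j$ fixed, reducing everything to two structural properties of the SBS's choice function $C_{s_j}$. The first is the rejection‑monotonicity implied by \emph{substitutability} for a function that selects a $\succ_{s_j}$‑best subset: for any two sets $S\subseteq T$ of candidate videos and any $v\in S$, if $v\notin C_{s_j}(S)$ then $v\notin C_{s_j}(T)$; in words, enlarging the pool of competitors can only turn acceptance into rejection, never the reverse. I would first record this as a short consequence of the definition of \emph{substitutability}, since removing a chosen element never unseating another chosen one is, by contraposition, the same as adding elements never rescuing a rejected one. The second property is the irrelevance of rejected videos: because $C_{s_j}$ returns the $\succ_{s_j}$‑best subset of the offered set, if $C_{s_j}(T)\subseteq S\subseteq T$ then $C_{s_j}(S)=C_{s_j}(T)$, i.e. discarding videos that were already rejected from $T$ cannot change what is kept.

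Next I would pin down how the proposal sets evolve across a step. Writing $A_p:=C_{s_j}(\mathcal{P}_{s_j}(p))$ for the videos $s_j$ holds after step $p$, the \emph{Offers remain open} proposition guarantees that every video in $A_p$ is not rejected and is therefore re‑proposed to $s_j$ at step $p+1$, whereas a video rejected at step $p$ is dropped by its SPS and never re‑proposed to $s_j$. Hence the proposal set decomposes as $\mathcal{P}_{s_j}(p+1)=A_p\cup N_{p+1}$, where $N_{p+1}$ collects the videos reaching $s_j$ for the first time at step $p+1$. The point this records is that, in passing from step $p$ to $p+1$, only newly offered videos are added while the previously rejected ones are deleted.

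The induction then runs as follows. The base case $p=k$ is immediate, since $v_i$ being rejected at step $k$ means $v_i\in\mathcal{P}_{s_j}(k)$ and $v_i\notin C_{s_j}(\mathcal{P}_{s_j}(k))$, so $v_i\notin C_{s_j}(\mathcal{P}_{s_j}(k)\cup\{v_i\})$. For the inductive step, assume $v_i\notin C_{s_j}(\mathcal{P}_{s_j}(p)\cup\{v_i\})$. Enlarging this set by $N_{p+1}$ and invoking rejection‑monotonicity gives $v_i\notin C_{s_j}(\mathcal{P}_{s_j}(p)\cup N_{p+1}\cup\{v_i\})$. Finally I would strip off the videos that had been rejected at step $p$: each lies in $\mathcal{P}_{s_j}(p)\setminus A_p$, is therefore rejected from the larger set as well, and so by the irrelevance‑of‑rejected‑videos property removing them leaves the choice set unchanged, that is $C_{s_j}(A_p\cup N_{p+1}\cup\{v_i\})=C_{s_j}(\mathcal{P}_{s_j}(p)\cup N_{p+1}\cup\{v_i\})$. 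Since $A_p\cup N_{p+1}=\mathcal{P}_{s_j}(p+1)$, this yields $v_i\notin C_{s_j}(\mathcal{P}_{s_j}(p+1)\cup\{v_i\})$, closing the induction.

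The step I expect to be the real obstacle is exactly this last deletion, because the proposal sets are \emph{not} nested: going from step $p$ to $p+1$ simultaneously adds new competitors (which helps, via monotonicity) and removes the videos $s_j$ had just rejected (which a priori could make $v_i$ acceptable again). Monotonicity alone is powerless against the removal, so the argument genuinely needs that the removed videos were themselves rejected and hence irrelevant to $C_{s_j}$; making that precise through the irrelevance‑of‑rejected‑videos property, while keeping $v_i$ in the picture throughout by the explicit $\cup\{v_i\}$, is where the proof must be handled with care.
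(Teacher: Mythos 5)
Your proof is correct, and it is in substance the Roth-style argument the paper uses, though run in the opposite direction: the paper argues by contradiction from a \emph{first} violating step $p$ and contracts the offer set downward, passing from $\mathcal{P}_{s_j}(p)\cup\{v_i\}$ to $C_{s_j}(\mathcal{P}_{s_j}(p-1))\cup\{v_i\}$ via substitutability and the containment of held sets supplied by Proposition~1, and then re-expanding to $\mathcal{P}_{s_j}(p-1)\cup\{v_i\}$; you instead run a forward induction, expanding $\mathcal{P}_{s_j}(p)\cup\{v_i\}$ by the newly arrived proposals $N_{p+1}$ and then deleting the just-rejected videos via the decomposition $\mathcal{P}_{s_j}(p+1)=C_{s_j}(\mathcal{P}_{s_j}(p))\cup N_{p+1}$. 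The two key choice-function properties are identical in both arguments --- substitutability in its monotone ``enlarging the pool never rescues a rejected alternative'' form, and consistency (if $C(T)\subseteq S\subseteq T$ then $C(S)=C(T)$). What your version buys is that it makes the second property explicit: the paper's step ``and thus $v_i\in C_{s_j}(\mathcal{P}_{s_j}(p-1)\cup\{v_i\})$'' silently invokes exactly the irrelevance-of-rejected-alternatives argument you isolate, and you correctly flag that substitutability alone cannot justify it because the proposal sets are not nested. One caveat applies equally to you and to the paper: the monotonicity you use is the general subset form of substitutability ($S\subseteq T$ implies $C(T)\cap S\subseteq C(S)$), whereas the paper's Definition~3 only states single-element removal of a \emph{chosen} partner; passing from the latter to the former itself requires the consistency property of a most-preferred-subset choice function, so if you wanted full rigor you would prove that small equivalence first --- but the paper's own proofs of Propositions~1 and~2 take the same liberty.
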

\begin{proof}
Assume that the proposition is false, and let $p \geq k$ be the first step at which $v_i \in C_{s_j}(\mathcal{P}_{s_j}(p) \cup \{v_i\})$. Since $C_{s_j}(\mathcal{P}_{s_j}(p))$ contains $C_{s_j}(\mathcal{P}_{s_j}(p-1))$ by Proposition 1, substitutability implies that $v_i\in C_{s_j}(C_{s_j}(\mathcal{P}_{s_j}(p-1))\cup \{v_i\})$ and thus $v_i \in C_{s_j}(\mathcal{P}_{s_j}(p-1)\cup \{v_i\})$, which contradicts the definition of $p$ and completes the proof of the proposition.  
\end{proof}

\begin{theorem}
The proposed matching algorithm between SPSs and SBSs is guaranteed to converge to a pairwise stable matching.
\end{theorem}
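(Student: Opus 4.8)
The plan is to prove two things: that the repeat-loop terminates after finitely many steps, and that the matching $\mu$ it outputs admits no blocking pair. Both rest directly on Propositions 1 and 2, which have already captured the monotone behaviour of the choice sets. Convergence is almost immediate from Proposition 2, whereas pairwise stability is handled by contradiction using both propositions together.

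For convergence, I would note that there are at most $V\cdot M$ video--SBS pairs. By Proposition 2 every rejection is permanent: once $v_i$ is rejected by $s_j$ it can never re-enter $C_{s_j}(\mathcal{P}_{s_j}(p)\cup\{v_i\})$ at any later step $p$. Hence any non-terminal iteration of the loop must register at least one fresh rejection, strictly shrinking the pool of still-live pairs. As this pool is a monotonically decreasing subset of a finite set, it stabilises after finitely many steps, at which point no rejection is issued and the loop exits.

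For stability, suppose toward a contradiction that the output $\mu$ is blocked by a pair $(v_i,s_j)$ with $s_j\notin\mu(v_i)$ and $v_i\notin\mu(s_j)$, so that $s_j\in C_{v_i}(A(v_i,\mu)\cup\{s_j\})$ with $C_{v_i}(A(v_i,\mu)\cup\{s_j\})\succ_{v_i}A(v_i,\mu)$ and symmetrically $v_i\in C_{s_j}(A(s_j,\mu)\cup\{v_i\})$ with $C_{s_j}(A(s_j,\mu)\cup\{v_i\})\succ_{s_j}A(s_j,\mu)$. I would first argue that $v_i$ must have proposed to $s_j$ at some step: if it never did, then $s_j$ never rejected $v_i$ and so remained available to $v_i$ throughout; since $A(v_i,\mu)$ is exactly $v_i$'s choice set from its final available set and $s_j\notin A(v_i,\mu)$, substitutability forces $s_j\notin C_{v_i}(A(v_i,\mu)\cup\{s_j\})$, contradicting the block. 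So $v_i$ did propose to $s_j$, and as $s_j\notin\mu(v_i)$ it was rejected at some step $k$. Proposition 2 then yields $v_i\notin C_{s_j}(\mathcal{P}_{s_j}(p)\cup\{v_i\})$ at the terminal step $p$. Since $A(s_j,\mu)=C_{s_j}(\mathcal{P}_{s_j}(p))\subseteq\mathcal{P}_{s_j}(p)$, the consistency of a substitutable choice function gives $C_{s_j}(A(s_j,\mu)\cup\{v_i\})=C_{s_j}(\mathcal{P}_{s_j}(p)\cup\{v_i\})$, whence $v_i\notin C_{s_j}(A(s_j,\mu)\cup\{v_i\})$, contradicting the second half of the blocking condition. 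No blocking pair can therefore exist, and $\mu$ is pairwise stable.

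The step I expect to be the main obstacle is that final reduction: passing from the rejection statement over the full proposal set $\mathcal{P}_{s_j}(p)\cup\{v_i\}$ supplied by Proposition 2 to the rejection statement over the smaller set $A(s_j,\mu)\cup\{v_i\}$ that the definition of a blocking pair actually references. Rejection does not persist to subsets under substitutability on its own; the identity I need, $C_{s_j}(C_{s_j}(\mathcal{P}_{s_j}(p))\cup\{v_i\})=C_{s_j}(\mathcal{P}_{s_j}(p)\cup\{v_i\})$, is precisely the consistency (path-independence) property of substitutable choice --- the same device already used implicitly in the proof of Proposition 1 and in the $v_i$-side argument above. I would state this identity as a short lemma (or cite it) so that both subset reductions are fully justified rather than merely assumed.
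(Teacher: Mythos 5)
Your proof is correct and follows essentially the same route as the paper's: a contradiction argument showing that any blocking pair $(v_i,s_j)$ would require $s_j$ to have rejected $v_i$, which Proposition 2 rules out at termination. Your version is in fact more complete than the paper's one-line proof --- you supply the termination argument (which the paper omits entirely despite the word ``converge'' in the statement), justify why $v_i$ must have proposed to $s_j$, and correctly flag the consistency identity $C_{s_j}(C_{s_j}(\mathcal{P}_{s_j}(p))\cup\{v_i\})=C_{s_j}(\mathcal{P}_{s_j}(p)\cup\{v_i\})$ as the step the paper silently assumes when passing from $\mathcal{P}_{s_j}(p)$ to $A(s_j,\mu)$.
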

\begin{proof}
We make the proof by contradiction. Suppose that there exists a video $v_i$ and an SBS $s_j$ with $v_i\notin\mu(s_j)$ and $s_j\notin \mu(v_i)$ such that $T\in C_{v_i}(A(v_i,\mu) \cup \{s_j\})$, $S\in C_{s_j}(A(s_j,\mu) \cup \{v_i\})$ and $T\succ_{v_i} A(v_i,\mu)$ and $S\succ_{s_j} A(s_j,\mu)$. Then, the video $v_i$ was proposed to the SBS $s_j$ which rejected it at some step $k$. So $v_i\notin C_{s_j}(A(s_j,\mu)\cup \{v_i\})$, and thus $\mu$ cannot be unstable. 
\end{proof}
\section{Simulation Results and Analysis}\label{sec:sim}\vspace{-0.1cm}
For simulations, we consider a network in which we set the number of SPSs, SBSs, UEs and videos to $K=80$, $M=150$, $N=400$ and $V=100$, respectively. We assume that all the SBSs have the same storage capacity, while the backhaul links capacities are lower than radio links capacities which captures the real network characteristics \cite{JDSU}. We set the total backhaul links capacities and radio links capacities to $B=80 \textrm{ Mbit/time slot}$ and $R=180 \textrm{ Mbit/time slot}$, respectively. Due to the unavailability of social dataset that includes the addressed parameters in this paper (e.g., the accessed video categories and shared videos by users) because of privacy reasons, we generate files popularity and user requests pseudo-randomly. The popularity of files is generated at each SBS with a Zipf distribution which is commonly used to model content popularity in networks \cite{Bres99}. Users' requests are generated using a uniform distribution.

To evaluate the performance of the proactive catching algorithm, we implement the proposed matching algorithm (MA) as well as a random caching policy (RA), in which the SBSs are filled randomly with videos from the SPSs to which they are connected to, until reaching their storage capacity limit. We run the two algorithms for different values of a storage ratio $\beta$, which represents the number of files that each SBS has the capacity to store. More formally, $\beta =\frac{q_i}{V} \forall i\in V$. In the numerical results, we show the evolution of the satisfaction ratio, which corresponds to total number of served requests by the SBSs over the total number of requests, while increasing the number of requests in the network. Moreover, we compare the mean required time for downloading all the videos by the requesting users.      

\begin{figure}[!t]
  \begin{center}
    \includegraphics[scale=0.48]{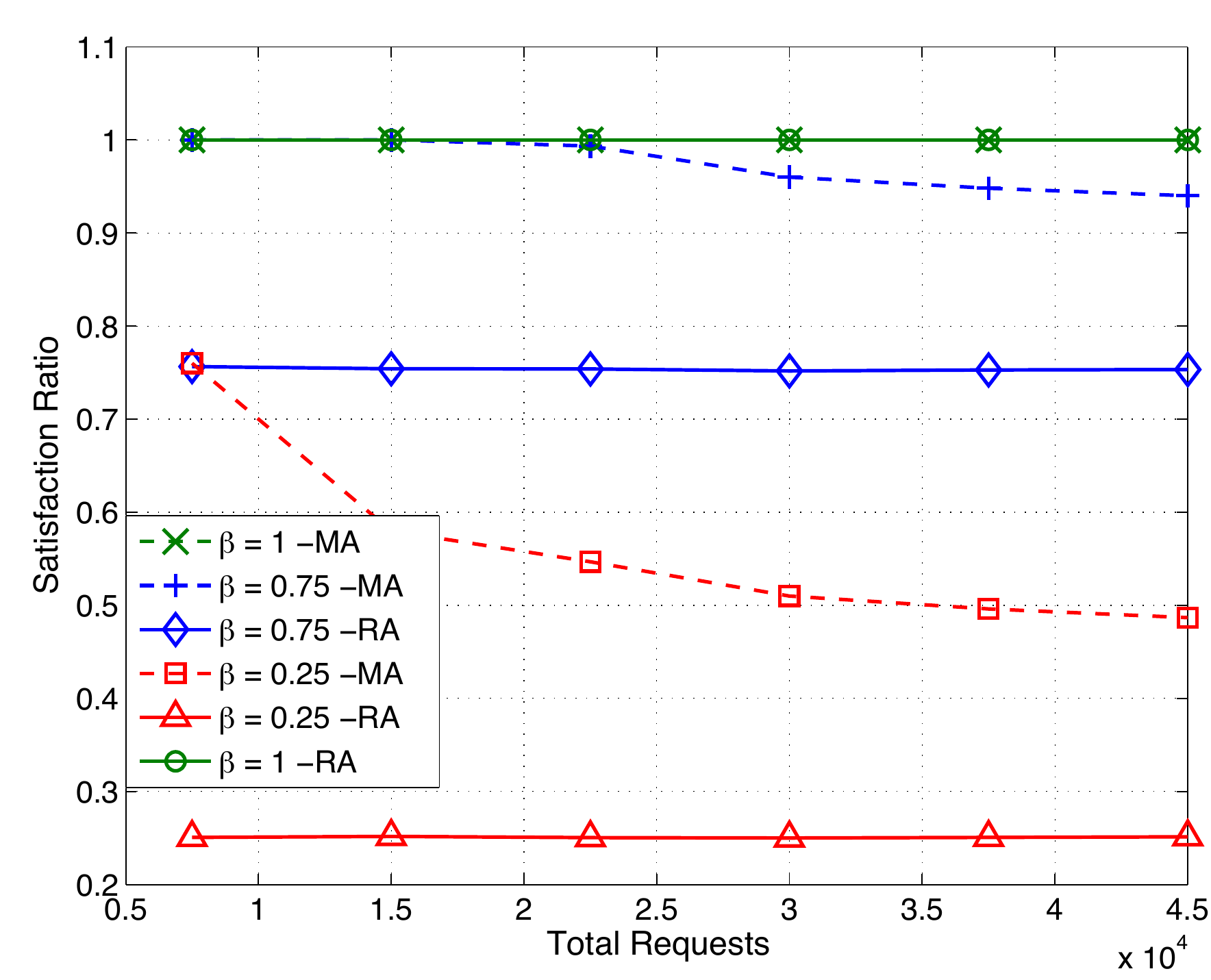}
    \vspace{-0.3cm}
    \caption{\label{rat} Satisfaction evolution for MA and RA.}
        \vspace{-0.6cm}
  \end{center}
\end{figure}

In Fig. \ref{rat}, we show the proportion of served users for $\beta \in\{0.25, 0.75, 1\}$. The satisfaction ratio decreases when the storage capacity of the SBSs decreases. This is evident as less files are cached in the SBSs. Fig. \ref{rat} shows that, when each SBS has the capacity to store all the proposed videos by the SPSs ($\beta=1$), the \emph{satisfaction ratio} remains equal to 1 irrespective of the used caching policy. This result stems from the fact that all the requests are served locally by the SBSs. When the SBSs have the capacity to store $25\%$ and $75\%$ of the proposed videos by the SPSs, i.e., $\beta=0.25$ and $\beta =0.75$ respectively, the satisfaction of the MA is up to three times higher than the RA ($\beta=0.25$). Under the MA, the number of served users by the SBSs decreases by increasing the number of requests. This is due to the fact that SBSs choose to cache first the files with a higher local popularity. Fig. \ref{rat} shows that, as the number of requests increases, the satisfaction ratio of the RA changes only slightly due to the uniform selection of videos under the same distribution of files popularity. 

Fig. \ref{tim} shows the expected download time of all the files in the network for $\beta \in \{0.25, 1\}$. The download time is lower when the requests are served by the SBSs ($\beta =1$) compared to the case in which some requests need to be served by the SPSs ($\beta=0.25$). This is due to the fact that those videos would require more time to pass through the backhaul links which are of lower capacities compared to the radio links. In Fig. \ref{tim}, we can see that, when $\beta=1$, the expected download time is identical for both RA and MA. When $\beta=0.25$, although the download time increases by increasing the number of requests in the network due to the network congestion, the MA outperforms the RA. In fact, the higher satisfaction ratio of the MA compared to RA leads to a smaller expected download time because most of the requests are served via the radio links.      
\begin{figure}[!h]
	\center
	\includegraphics[scale=0.48]{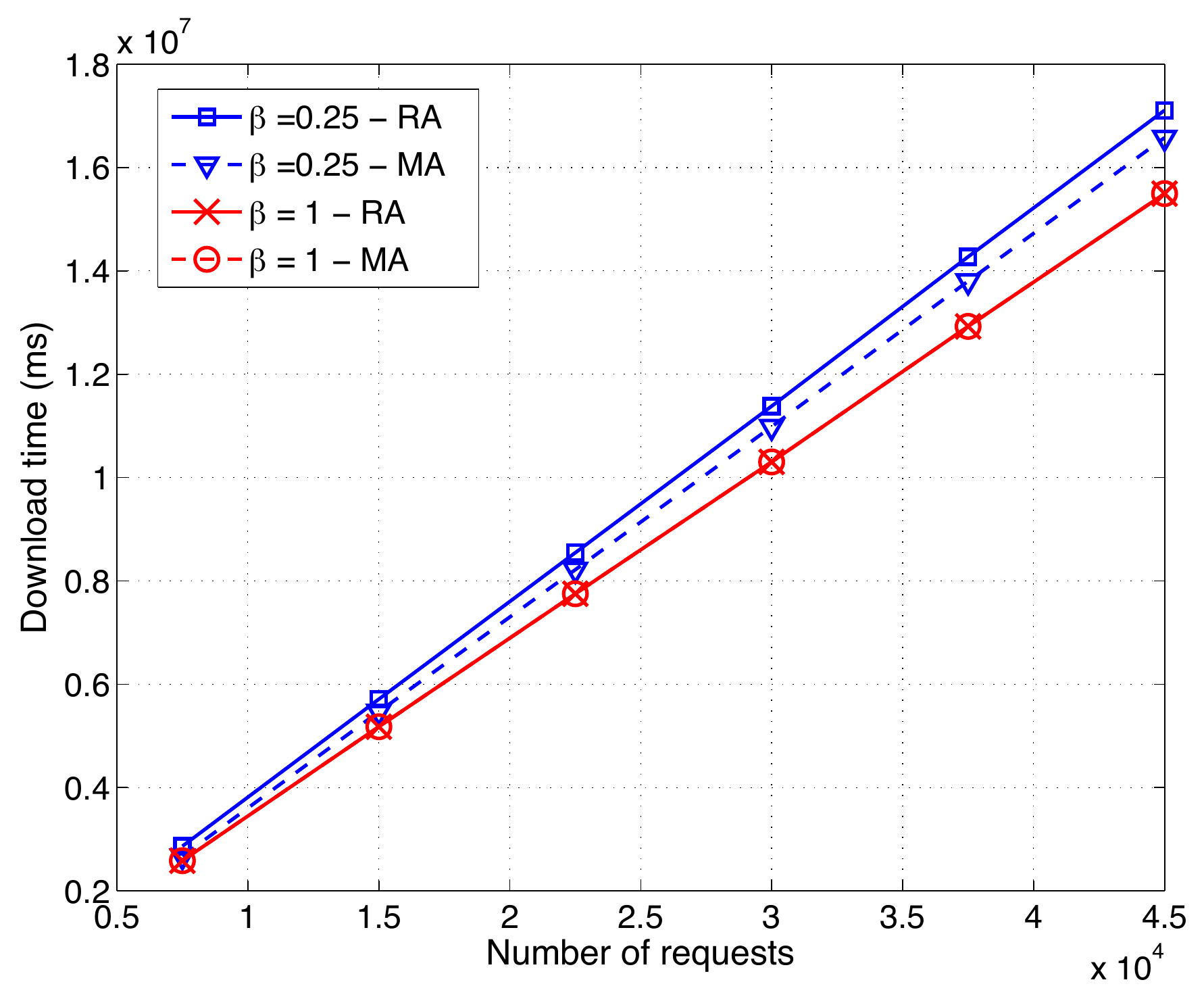}
	  \vspace{-0.3cm}
	 	\caption{Download time as the number of requests increases.}
	 	\vspace{-0.3cm}
	\label{tim}
\end{figure}

As seen from Figs. \ref{rat} and \ref{tim}, clearly, the proposed MA allows to efficiently overcome the backhaul bottleneck and improve the performance of video download in small cell networks.

\section{Conclusion}\label{sec:conc} \vspace{-0.2cm}
In this paper, we have proposed a novel caching approach for overcoming the backhaul capcity constraints in wireless small cell networks. We have formulated a many-to-many matching game while considering the limited capacity of backaul links as well as the impact of the local popularity of each video at the SBSs. To solve the game, we have proposed a new matching algorithm that assigns a set of videos to each SBSs. Simulation results have shown that the proposed matching game enables the SBSs and SPSs to decide strategically on a cache placement that reduces the backhaul links load as well as the experienced delay by the end-users.\vspace{-0.2cm}

\bibliographystyle{IEEEtran}
\bibliography{references}

\end{document}